\documentclass[twocolumn]{IEEEtran}
\usepackage{url,amsmath,graphicx}
 \usepackage{stfloats}
 \usepackage{eqparbox}
  \usepackage{array}
  \usepackage{fixltx2e}
  \usepackage{stfloats}
   \usepackage{cite}
   \usepackage{url}
\usepackage{fancyhdr}
\usepackage{cite}
\usepackage{graphicx}
\usepackage{psfrag}
\usepackage{url}
\usepackage{stfloats}
\usepackage{amsmath}
\usepackage{array}
\usepackage{fancyhdr}
\usepackage{hyperref}
\usepackage{psfrag}
\usepackage{subcaption}
\usepackage{graphicx}
\usepackage{amssymb}
\usepackage{color}
\usepackage{url,amsmath,graphicx}
\usepackage{amsmath,amssymb,amscd,latexsym,dsfont,amsthm}
 \usepackage{stfloats}
 \usepackage{eqparbox}
  \usepackage{array}
  \usepackage{fixltx2e}
  \usepackage{stfloats}
   \usepackage{cite}
   \usepackage{url}
\usepackage{cite}
\usepackage{graphicx}
\usepackage{psfrag}
\usepackage{url}
\usepackage{stfloats}
\usepackage{array}
\usepackage{fancyhdr}
\usepackage{psfrag}
\usepackage{subcaption}
\usepackage{graphicx}
\usepackage{lipsum}
\usepackage{amssymb}
\usepackage{color}
\usepackage{mathtools, cuted}
\usepackage{lipsum, color}
\usepackage{amssymb}
\usepackage{color}

\newtheorem{theorem}{Theorem}
\newtheorem{lemma}{Lemma}
\newtheorem{proposition}{Proposition}
\newtheorem{definition}{Definition}

\newtheorem{corollary}{Corollary}

\begin{document}

\title{ Copula-Based Modeling of RIS-Assisted Communications: Outage Probability Analysis}

\author{ Im\`ene~Trigui,  Member, \textit{IEEE},  Damoon Shahbaztabar,  Member, \textit{IEEE},\\ Wessam Ajib, Senior Member, \textit{IEEE}, and  Wei-Ping Zhu, Senior Member, \textit{IEEE}.
\vspace{-0.5cm}
\thanks{
%This work was supported in part by the Regroupement Strategique en Microsystemes du Quebec (ReSMiQ) and in part by Natural Sciences and Engineering Research Council (NSERC) of Canada.

Im\`ene~Trigui and Wessam Ajib are with the Departement d’informatique, Université du Québec à Montréal, Montreal, QC H2L 2C4, Canada, e-mail: trigui.imene@uqam.ca and ajib.wessam@uqam.ca.
Damoon Shahbaztabar and Wei-Ping Zhu are with the Department of Electrical and Computer Engineering, Concordia University, Montreal, QC H3G 1M8, Canada, e-mail:damoon.shahbaztabar@concordia.ca and weiping@ece.concordia.ca. }
}

\maketitle
\begin{abstract}
Statistical characterization of the signal-to-noise ratio (SNR) of reconfigurable intelligent surface (RIS)-assisted communications in the presence of phase noise is an important open issue.  In this letter, we  exploit the concept of copula modeling to capture the non-standard dependence features that appear due to  the presence of discrete  phase noise. In particular, we consider the outage probability of RIS systems in Rayleigh fading channels and provide  joint distributions to characterize the dependencies due to the use of finite resolution phase shifters at the RIS.
 Numerical assessments confirm the validity of closed-form expressions of the outage probability and motivate the use of bivariate copula for further RIS studies.

\end{abstract}
%\begin{IEEEkeywords}Copula Theory, Joint Distributions, Reconfigurable Intelligent Surface (RIS), Quantization, Phase Noise. \end{IEEEkeywords}

\vspace{-0.3cm}
\section{Introduction}
Reconﬁgurable intelligent surfaces (RISs) have recently received remarkable attention as a revolutionary technology for the next generation of wireless communications to provide higher quality of service and spectrum efficiency \cite{Liu}.   It consists of arrays  of passive reflecting elements able to introduce specific phase shifts on the impinging signal \cite{Renzo}, \cite{Wu}, \cite{trigui}. Attracted by the
 appealing advantages of RIS, most works focused on the phase shift matrix design with/without the transmit beamforming to achieve optimal performance and maximum reliability (\!\!\cite{Liu} and references therein). The influence of phase noise due to low-resolution quantization and imperfect channel estimation has also been investigated in  \cite{trigui}-\!\!\cite{ITrigui}.   In particular, a power scaling analysis in  \cite{qw} showed that  using 2 or 3-bit phase shifters is practically sufficient to achieve close-to-optimal performance.  However,  there are very few works that  derived the outage probability of RIS-aided communication systems in the presence of phase noise.   To name a few, the central limit theorem (CLT) was utilized  to  approximate the RIS channel as a point-to
point Nakagami fading channel in \cite{Badiu}.  In \cite{Wang}, the authors  presented a rough asymptotic outage approximation under the consideration of one bit phase quantization and the non-realistic assumption of perfect independence  between the signal components.
Despite the large effort, however, the  exact outage probability
considering  $b$-bit phase quantization  is still not available in  the open literature.
 The paucity of this investigation is mainly due to the  mathematical  intricacy  of  handling  the  cascaded  fading channels of the RIS the  existence of non-standard
dependence features between the received signal components  caused by discrete  phase noise. To address this open issue, we  utilize the concept of copula modeling to study the  impact of the joint distribution of  the outage probability. In fact, copulas   allow modeling general
dependency structures and have already been used in
the area of communications \cite{Nelsen}, \cite{Bickel},  \cite{Ghadi}\\
To the best of our knowledge, there has been no previous work applying the copula theory to investigate the performance of RIS-assisted communications.  In the previous works, either  CLT analysis \cite{Badiu} or asymptotic formulation \cite{Zhi}, \cite{Wang} are suggested for RIS with phase noise. However, the channel models proposed in the absence of phase noise have to be further simplified to tractable formulation as shown in for example \cite{u1}, where the composite channel gain is approximated by the Gamma distribution. Unlike previous works, we use copula modeling to realize the non-linear dependence structure that appears due to phase noise.
As  such,  we  characterize  the  joint  distribution  and  propose  tractable  model  based  on  copulas.  Using  copula  model, we  derive  closed-form  outage probabilities considering general $b$-bit phase quantization.  Our results  provide a solid basis for future studies of and system design of RIS-assisted networks.\vspace{-0.32cm}
\section{System Model}
In this paper, we consider an RIS-aided
system, which consists of a single-antenna transmitter,
an RIS equipped with $M$ elements, and a single-antenna receiver. The  RIS dynamically adjusts the reflecting coefficient of each element to reconfigure the incident signal with the desired phase shift. Thus, the received signal
at the receiver is written as\vspace{-0.15cm}
\begin{equation}
  y=\sqrt{ l p_t }{\bf h}{\bf\Phi}{\bf g}s+{ z}_1,
    \label{eq1}
\end{equation}
where ${ z}_1$ represents thermal noise with power $\sigma^{2}$, $p_t$ is the transmit power, and $l$ is the equivalent
path-loss of the RIS link, which
is composed of a forward channel from the transmitter to the RIS
and a backward channel from the RIS to the receiver, denoted by
${\bf h}\in {\cal C}^{M\times 1}$ and ${\bf g}\in {\cal C}^{1\times M}$, respectively. We assume that all
the channels in the system undergo independent Rayleigh fading, and the instantaneous channel state information (CSI) for all links is assumed to be available at the receiver and RIS.
Hence, the RIS is expected to intelligently reconfigure the wireless channel by varying a quantized phase matrix ${\bf\Phi}$ defined as\vspace{-0.15cm}
\begin{equation}
    {\bold \Phi}=\Big\{e^{j\phi_n}, \phi_n\in\Big\{0, \frac{2\pi}{L}, \ldots,\frac{2\pi(L-1)}{L}\Big\}\Big\},
\end{equation}
where $L=2^{b}$, with $b\geq 1$, is the
number of discrete phases that can be generated by the RIS subject
to hardware complexity and power consumption \cite{ITrigui},\cite{trigui}. \\
In practice, the phase shifts of the reconfigurable elements of an RIS cannot be optimized with an arbitrary precision because of the finite number of quantization bits used or possible errors in estimating the phases of fading channels. In this case, the phase of the $i$-th element of the RIS can be written as $\phi_i=-\angle{h_i}-\angle{g_i}+\theta_i$, where $\theta_i$ denotes a random phase noise, which is assumed to be i.i.d. in this paper. Thus, the equivalent channel observed by the  receiver is a complex random variable and the SNR is \vspace{-0.25cm}
\begin{equation}
\gamma=\rho\left|\sum_{i=1}^{M} |h_i| |g_i| e^{j \theta_i}\right|^{2},
\label{snoise}
\end{equation}
where $\rho=l \rho_S$  with $\rho_S=\frac{ p_t}{\sigma^{2}}$ denoting the transmit SNR and  $\theta_i$
represents the phase error which is uniformly
distributed over $[-\pi/L,\pi/L]$.  To the best of our knowledge, characterising the distribution of (3) as the key to the outage analysis of the RIS-aided system is not straightforward, and it is the first endeavor in this paper.\vspace{-0.2cm}
\section{A Brief Review of Copula Theory}
Copula as a novel method to model correlated random
variables (RVs) \cite{Nelsen}, enables the computation of the joint distributions of these RVs from their marginal PDFs.
Each copula function is defined by a particular dependence
parameter which indicates the intensity of dependency.
\begin{definition}(Copula) A copula is an $n$-dimensional distribution function with standard uniform marginals.
\end{definition}
The practical relevance of copulas stems from Sklar's theorem, which we restate in the following.
\begin{theorem} (Sklar's Theorem \cite{Nelsen}). Let $F(x_1,x_2,\ldots,x_n)$ be an $n$-dimensional joint cumulative distribution function of random variables $(x_1,x_2,\ldots,x_n)$ with marginal CDFs  $F_1(x_1),\ldots,F_n(x_n)$.  Then there exists a copula $C$ such that,\vspace{-0.1cm}
\begin{equation}
F(x_1,\ldots, x_n) = C(F_1(x_1),\ldots, F_n(x_n)),
\label{F1}
\end{equation}
for all $x_i \in \mathbb{R}$. Furthermore, if $F(x_i)$ is continuous for all $i = 1, \dots, n$, then $C$ is unique.
\end{theorem}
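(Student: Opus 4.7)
The plan is to prove existence by explicitly constructing the copula through the quasi-inverses of the marginals, and then establish uniqueness via the probability integral transform. This is the classical route taken in Nelsen and can be adapted here.

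First, I would introduce the generalized (left-continuous) inverse $F_i^{(-1)}(u) = \inf\{x \in \mathbb{R}: F_i(x) \geq u\}$ for $u \in [0,1]$, with the conventions $F_i^{(-1)}(0) = -\infty$ and $F_i^{(-1)}(1)=+\infty$ handled by limits. The two identities I would lean on are $F_i(F_i^{(-1)}(u)) \geq u$, with equality whenever $F_i$ is continuous, and $F_i^{(-1)}(F_i(x)) \leq x$, with equality whenever $F_i$ is strictly increasing at $x$. These follow directly from the monotonicity and right-continuity of $F_i$.

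Next, I would define the candidate copula by
\begin{equation}
C(u_1,\ldots,u_n) = F\bigl(F_1^{(-1)}(u_1),\ldots,F_n^{(-1)}(u_n)\bigr),
\end{equation}
and verify it is indeed a copula in the sense of Definition~1. The three things to check are: (i) groundedness, $C(u_1,\ldots,u_n)=0$ whenever some $u_i = 0$, which comes from $F_i^{(-1)}(0)=-\infty$ and the limits of $F$; (ii) uniform marginals, $C(1,\ldots,1,u_i,1,\ldots,1) = u_i$, which comes from $F_i(F_i^{(-1)}(u_i))=u_i$ under continuity of $F_i$; and (iii) the $n$-increasing property, which is inherited from the $n$-increasing property of $F$ since each $F_i^{(-1)}$ is monotone non-decreasing. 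The decomposition \eqref{F1} then follows by substituting $u_i = F_i(x_i)$ into the definition of $C$ and using $F_i^{(-1)}(F_i(x_i))=x_i$ almost surely when $F_i$ is continuous.

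For uniqueness under continuous marginals, I would observe that $\mathrm{Range}(F_i) = [0,1]$ so that every $(u_1,\ldots,u_n)\in[0,1]^n$ can be written as $(F_1(x_1),\ldots,F_n(x_n))$ for some choice of $x_i$. Any two copulas $C,C'$ realising \eqref{F1} therefore agree on the whole unit cube, forcing $C=C'$. The main obstacle, which I would flag rather than resolve in detail since the statement restricts uniqueness to the continuous case, is that when some $F_i$ has jumps the copula is only pinned down on the product of the ranges of the $F_i$ and must be extended to $[0,1]^n$ by multilinear interpolation; preserving the $n$-increasing condition on that extension is the delicate part of the general argument and is what forces the uniqueness claim to be stated only under continuity.
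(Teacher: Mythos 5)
The paper itself offers no proof of this statement: Sklar's theorem is imported verbatim from the cited reference \cite{Nelsen}, so there is no in-paper argument to compare yours against. Your construction via quasi-inverses, $C(u_1,\ldots,u_n)=F\bigl(F_1^{(-1)}(u_1),\ldots,F_n^{(-1)}(u_n)\bigr)$, is the canonical route taken in that reference; the verification of groundedness, of the $n$-increasing property by pullback along the monotone maps $F_i^{(-1)}$, and the uniqueness argument via surjectivity of continuous marginals onto $[0,1]$ are all sound. Two caveats. First, the identity you invoke to recover (\ref{F1}), namely $F_i^{(-1)}(F_i(x_i))=x_i$, genuinely fails on flat stretches of a continuous $F_i$; what saves the substitution is that $F$ assigns no mass to the $i$-th coordinate slab between $F_i^{(-1)}(F_i(x_i))$ and $x_i$, so $F$ is unchanged by the replacement --- your ``almost surely'' gestures at this, but the clean statement is invariance of $F$, not pointwise equality of the arguments. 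Second, and more substantively, the theorem asserts existence of a copula for an \emph{arbitrary} joint CDF, whereas your verification of the uniform-marginals property already assumes continuity of each $F_i$; the discontinuous case, which you correctly identify as requiring a multilinear-interpolation extension off the product of the ranges of the marginals, is precisely the nontrivial half of Sklar's theorem and is left unproven. For the purposes of this paper that gap is immaterial, since every marginal used downstream (the densities and distributions in (\ref{px})--(\ref{pz}) and the Gamma approximations of Lemma 1) is absolutely continuous, so the restricted version you actually establish is all that the subsequent propositions rely on.
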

Although many types of copulas have been defined so
far, we exploit the  Farlie Gumbel-Morgenstern (FGM) copula function [12] to analyze the performance metrics of the considered RIS-assisted system.  The generalized FGM copula of $n$-dimension is defined as \vspace{-0.2cm}
\begin{eqnarray}
    \!\!\!\!\!\!\!\!C(u_1, \ldots,u_n)\nonumber \\&&\!\!\!\!\! \!\!\!\!\!\!\!\!\!\!\!\!\!\!\!\!\!\!\!\!\!\!\!\!\!\!\!\!\!\!\!\!\!\!\!\!\!\!\!\!\!=u_1 u_2\ldots u_n \left(\!\!1+\sum_{k=2}^{n}\sum_{1 \leq j_1 \leq \ldots \leq j_n}\!\!\!\!\!\!\!\!\Theta_{j_1j_2\ldots,j_n}{\tilde u}_1{ \tilde u}_2\ldots{\tilde u}_n\!\! \right),
    \label{c1}
\end{eqnarray}
where ${\tilde u}=1-u$ and $\Theta \in [-1, 1]$ is a dependence structure parameter of the FGM copula.
\vspace{-0.27cm}
\section{Outage Probability Analysis Based on FGM Copula}
Here, we provide an FGM copula based  construction of bivariate random variables with arbitrary dependency. This construction is then used to derive the outage probability (OP) and  unveil the impact of the
joint distribution and the dependency on its performance.\\
The OP is defined as the probability that the instantaneous SNR $\gamma$ fall below a determined threshold $\gamma_{th}$, namely, \vspace{-0.22cm}
\begin{equation}
{\cal O}_p\!=\!{\rm P}\left(\!\rho\left|\sum_{i=1}^{M} |h_i| |g_i| e^{j \theta_i}\right|^{2}\!\!\leq\!\gamma_{th}\!\right)={\rm P}\!\left( \! {X^2}\! +\! {Y^2}\!\leq\! \rho_t\!\right),
\label{snr}
\end{equation}
where $\rho_t={\gamma_{th}}/{\rho}$ ,  $X=\sum_{i=1}^{M}|h_i| |g_i|\cos(\theta_i)$ and $Y=\sum_{i=1}^{M}|h_i| |g_i|\sin(\theta_i)$.
In (\ref{snr}), the real part and the imaginary part of the received signal through the RIS are separated.
We note that $X$ and $Y$ exhibit  generalized dependence structures beyond the
simple linear correlation concept widely used in wireless communications. Moreover,  considering the randomness of $X$ and $Y$ and using the transformation of random variables, the outage probability can be formulated as \vspace{-0.15cm}
  \begin{equation}
       {\cal O}_p =\int_{0}^{\rho_t}\int_{0}^{\sqrt{\rho_t-x}}\frac{1}{\sqrt{x}}f_{X,Y}\left(\sqrt{x}, y\right) dx dy,
       \label{opI}
   \end{equation}
where $f_{X,Y}\left(x, y\right)$ is the joint PDF of $X$ and $Y$.  Next, we exploit the FGM Copula to construct the joint distribution $f_{X,Y}\left(x, y\right)$ .

\begin{proposition} The copula-based joint distribution of $X$ and $Y$ under one-bit quantization is obtained as\vspace{-0.15cm}
\begin{eqnarray}
    f_{X, Y}(x,y)&=&\frac{e^{-x-|y|}x^{M-1}}{2^{M}\Gamma(M)^{2}}\nonumber \\ &&\!\!\!\!\!\!\!\!\!\!\!\!\!\!\!\!\!\!\!\!\!\!\!\!\!\!\!\!\!\!\!\!\!\!\!\!\!\!\!\!\!
    \times\sum_{k=0}^{M-1}\frac{(M-1+k)!|y|^{M-1+k}}{2^{k}k!(M-1-k)!}
    \Bigg(1+\Theta\left(1-\frac{2\Gamma(M,x)}{\Gamma(M)}\right)\nonumber \\ && \!\!\!\!\!\!\!\!\!\!\!\!\!\!\!\!\!\!\!\!\!\!\!\!\!\!\!\!\!\!\!\!\!\!\!\!\!\!\!\!\!\times\left(1-\frac{1}{2^{M-1}\Gamma(M)}\!\!\sum_{k=0}^{M-1}\!\!\frac{(M-1+k)!\Gamma(M-k,y)}{2^{k-1}k!(M-1-k)!}\right)\!\Bigg),
    \label{fxy}
\end{eqnarray}
where $\Theta\in[-1,1]$ and  $\Gamma(a,z)$ stands for the incomplete Gamma function \cite{grad}.
\end{proposition}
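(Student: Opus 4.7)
The plan is to invoke Sklar's theorem with the bivariate Farlie-Gumbel-Morgenstern copula. Specialising the $n$-variate form (\ref{c1}) to $n=2$ gives the copula density $c(u,v)=1+\Theta(1-2u)(1-2v)$, so that
\begin{equation*}
f_{X,Y}(x,y)=f_X(x)f_Y(y)\bigl[1+\Theta(1-2F_X(x))(1-2F_Y(y))\bigr].
\end{equation*}
The task therefore reduces to (i) finding the marginal PDFs $f_X,f_Y$ and CDFs $F_X,F_Y$ of the in-phase and quadrature aggregate components under one-bit quantisation ($L=2$, $\theta_i\sim\mathcal{U}[-\pi/2,\pi/2]$), and (ii) substituting these into the FGM identity and regrouping terms.

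For the in-phase component $X=\sum_{i=1}^M|h_i||g_i|\cos\theta_i$, since $\cos\theta_i\ge 0$ the variable $X$ is non-negative. Using the Rayleigh moments of $|h_i|,|g_i|$ together with $E[\cos\theta_i]=2/\pi$ and $E[\cos^2\theta_i]=1/2$, the summands are i.i.d.\ non-negative with first two moments matching an Exponential$(1)$, so by Gamma-approximation of the double-Rayleigh product one obtains $X\sim\Gamma(M,1)$. This yields $f_X(x)=x^{M-1}e^{-x}/\Gamma(M)$ and $F_X(x)=1-\Gamma(M,x)/\Gamma(M)$, which directly produces the incomplete-gamma factor $(1-2\Gamma(M,x)/\Gamma(M))$ appearing in the first bracket of (\ref{fxy}).

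For the quadrature component $Y=\sum_{i=1}^M|h_i||g_i|\sin\theta_i$, the oddness of $\sin\theta_i$ under the uniform law makes $Y$ symmetric about zero. Representing each summand as the difference of its positive and negative parts (or working in the characteristic-function domain through the double-Rayleigh factor times the $\sin\theta_i$ contribution) leads to $Y$ being distributed as the difference of two independent Erlang$(M,1)$ variates. Its density follows from the half-integer-order modified Bessel function $K_{M-1/2}$, which collapses into the finite polynomial-times-exponential form displayed inside the braces of (\ref{fxy}). Integrating this density from $|y|$ to $\infty$ produces the survival-function expression that appears as the second parenthetical factor multiplying $\Theta$.

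Plugging $f_X,f_Y$ and the two mean-shifted factors $1-2F_X(x)$ and $1-2F_Y(y)$ into the FGM identity and tidying the constants recovers (\ref{fxy}) verbatim. The main obstacle is the marginal of $Y$: the exact distribution of a sum of $M$ products of a double-Rayleigh magnitude with a bounded sinusoidal error is not a standard convolution, so justification goes either through moment-matching to the difference of two independent Erlang$(M,1)$'s, whose density has exactly the stated closed form, or through direct inversion of the characteristic function via the half-integer-order Bessel identity in Gradshteyn--Ryzhik. Once this representation of $f_Y$ and the corresponding $F_Y$ is granted, everything else is algebraic bookkeeping inside the FGM construction.
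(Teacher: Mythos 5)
Your overall architecture --- the bivariate FGM density $c(u,v)=1+\Theta(1-2u)(1-2v)$ applied to the marginals of $X$ and $Y$ and then multiplied out --- is exactly the paper's route, and your identification of $Y$ with the difference of two independent Erlang$(M,1)$ variates (the half-integer Bessel $K_{M-1/2}$ collapsing to a finite exponential--polynomial form) is the correct closed form for $f_Y$. The genuine gap is in how you justify the marginals. The paper does not moment-match: its Appendix A computes the density of each \emph{individual summand} exactly, showing that $|h_i||g_i|\cos\theta_i$ is exactly exponentially distributed and $|h_i||g_i|\sin\theta_i$ is exactly Laplace distributed, by evaluating the product-density integral of the double-Rayleigh magnitude against the arcsine law of $\cos\theta_i$ for $\theta_i\sim\mathcal U[-\pi/2,\pi/2]$. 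Once each summand is exactly exponential (resp.\ Laplace), $X\sim\Gamma(M,1)$ and the stated $f_Y$, $F_Y$ follow by ordinary i.i.d.\ convolution. Your assertion that the marginal of $Y$ ``is not a standard convolution'' is precisely what this per-summand computation removes, and your fallback to moment matching would turn an exact proposition into an approximation --- which is not what Proposition 1 claims.

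The moment-matching justification you give for $X$ is also internally inconsistent as written: with $\mathbb{E}[z_i]=\pi/4$, $\mathbb{E}[z_i^2]=1$, $\mathbb{E}[\cos\theta_i]=2/\pi$ and $\mathbb{E}[\cos^2\theta_i]=1/2$, the summand has first moment $1/2$ and second moment $1/2$, which does not match an Exponential$(1)$ (mean $1$, second moment $2$); matching those two moments yields shape $1$ but a scale different from the one appearing in $f_X(x)=x^{M-1}e^{-x}/\Gamma(M)$. So the route you propose does not land on the stated marginal; you need the exact per-summand density derivation (Appendix A of the paper, or equivalently the observation that $|h_i||g_i|e^{j\theta_i}$ relates to the product of two independent complex Gaussians, whose real part is Laplace) to fix both the functional form and the scale. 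Everything downstream of the marginals --- the FGM assembly and the regrouping into the two $\Theta$-brackets --- is correct bookkeeping.
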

\begin{proof}
Using the copula theory and (\ref{F1}), the corresponding joint PDF can be obtained as follows:
\begin{eqnarray}
f_{X,Y}(x,y)&=&\frac{d^{2}C(F_X(x),F_Y(y))}{dx dy}\nonumber \\ &\overset{(a)}{=}&
 c(F_X(x),F_Y(y))f(x)f(y),
 \label{pdf}
\end{eqnarray}
where $(a)$ follows  by utilizing the concept of Chain rule with $f_X(x)$ and $f_Y(y)$ being the marginal pdfs of $X$ and $Y$, respectively, and $c(F_X(x), F_Y(y))$ denotes the bivariate Copula density function obtained from (\ref{c1}) as
\begin{equation}
   \!\!\! c(F_X(x), F_Y(y))=1+\Theta\left(\left(2F_X(x)\!-\!1\right)\left(2F_Y(y)\!-\!1\right)\right).
    \label{cf}
\end{equation}
 We assume that each element of the RIS is a one-bit phase shifter, then considering  the phase errors $\theta_i$, $i \in \{1, \ldots, M\}$ are mutually independent and uniformly
distributed on the interval $[$-$\pi/2, \pi/2]$, the marginal distributions of $X$ and $Y$ can be obtained, by following the rationale presented in Appendix A,  as
\begin{equation}
    f_X(x)=\frac{e^{-x}x^{M-1}}{\Gamma(M)}, \quad x\geq 0,
    \label{px}
\end{equation}\vspace{-0.15cm}
and \vspace{-0.2cm}
\begin{equation}
    f_Y(y)=\frac{e^{-|y|}}{2^{M}\Gamma(M)}\sum_{k=0}^{M-1}\frac{(M-1-k)!|y|^{M-1+k}}{2^{k}k!(M-1-k)!}, \quad y\in \mathbb{R}\label{pq}
\end{equation}
while the CDFs of $X$ and $Y$ can be obtained as
\begin{equation}
    F_X(x)=1-\frac{\Gamma(M,x)}{\Gamma(M)},\label{py}
\end{equation}
 and\vspace{-0.2cm}
\begin{equation}
   \!\!\! F_Y(y)=1-\frac{1}{2^{M}\Gamma(M)}\sum_{k=0}^{M-1}\frac{(M-1+k)!\Gamma(M-k,y)}{2^{k}k!(M-1-k)!}.\label{pz}
    \end{equation}
Hence, plugging (\ref{py})-(\ref{pz})  back into (\ref{cf}) and  then substituting (\ref{cf})-(\ref{pq}) into (\ref{pdf}) complete the proof.
\end{proof}
\begin{proposition}
When the RIS uses one-bit phase shifters, the outage probability in Rayleigh fading is given by (15), shown at the top of this page,
  \begin{figure*}\vspace{-0.25cm}
  \begin{eqnarray}
 {\cal O}_p&=&\frac{\rho_t^{M/2}(1+\Theta)}{2\sqrt{\pi}\Gamma(M)}{\rm H}_{0,1:1,1,1,2}^{0,0:1,1,2,1} \Bigg[\begin{array}{ccc}\sqrt{\rho_t} \\  \sqrt{\rho_t}\end{array}\Bigg|\begin{array}{ccc}  -: (1\!-\!\frac{M}{2}, \frac{1}{2});  (0,2) \\(-\frac{M}{2};\frac{1}{2},2)\!:\!(0,1);(0,2), (M-\frac{1}{2},2) \end{array}\Bigg.\Bigg]\nonumber \\ && -
 \frac{\rho_t^{M/2}\Theta}{2\sqrt{\pi}\Gamma(M)^{2}}{\rm H}_{0,1:2,2,1,2}^{0,0:2,2,2,1} \Bigg[\begin{array}{ccc}\sqrt{\rho_t} \\  \sqrt{\rho_t}\end{array}\!\! \Bigg|\begin{array}{ccc}  -:\! (M, 1), (1-\frac{M}{2},\frac{1}{2}); (0,2) \\(-\frac{M}{2};\frac{1}{2},2):(0,1), (M,1);(0,2), (M\!-\!\frac{1}{2},2) \end{array}\Bigg. \Bigg]
  \nonumber \\ && + \frac{\Theta\rho_t^{M/2}}{2^{3M}\Gamma(M)^{2}} \!\!\! \sum_{k=0}^{M-1}\sum_{t=0}^{M-1}\sum_{p=0}^{M-t-1}\frac{\Gamma(M\!+\!k)\Gamma(M\!-\!t)}{2^{2k+t+p}k!t!p!\Gamma(M\!-\!k)}\Bigg(\!\!{\rm H}_{0,1:1,1,1,2}^{0,0:1,1,2,1} \Bigg[\!\!\!\begin{array}{ccc}\sqrt{\rho_t} \\  2\sqrt{\rho_t}\end{array} \!\!\Bigg|\begin{array}{ccc}  -: (1-\frac{M}{2}, \frac{1}{2});  (1,1) \\(0;\frac{1}{2},\frac{1}{2}):(0,1); (M\!+\!k\!+\!p,1), (0,1)\end{array}\Bigg.\Bigg] \nonumber \\ &&
 \!-\!2{\rm H}_{0,1:2,2,1,2}^{0,0:2,2,2,1} \Bigg[\begin{array}{ccc}\!\!\!\!\sqrt{\rho_t} \\  \!\!2\sqrt{\rho_t}\!\!\end{array}\!\!\Bigg|\begin{array}{ccc} -: (M,1),(1\!-\!\frac{M}{2}, \frac{1}{2}); (1,1) \\(0;\frac{1}{2},\frac{1}{2})\!:\!(0,1), (M,1);(0,1), (M\!+\!k\!+\!p,1) \end{array}\Bigg. \Bigg] \Bigg),
  \end{eqnarray}
   \hrulefill\vspace{-0.35cm}
  \end{figure*}
  where $H[\cdot, \cdot]$ is the bivariate Fox’s H-function \cite[ Eq. (2.56)]{mathai}, for which several efficient implementations   have been reported in the literature \cite{trigui}.
  \end{proposition}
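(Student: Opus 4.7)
The plan is to insert the joint density from Proposition 1 into the outage integral (\ref{opI}), change variables to symmetrize the integrand, and then evaluate the resulting double integral through Mellin–Barnes contour representations, which naturally produce the bivariate Fox $H$-function appearing in (15). First I would perform the substitution $x=u^{2}$ in (\ref{opI}) so that the $1/\sqrt{x}$ Jacobian disappears and the domain becomes the simple triangle $\{(u,y):0\le u,y,\ u^{2}+y^{2}\le\rho_t\}$; by symmetry of $f_Y$ around $0$ I can replace $|y|$ by $y$ and multiply by $2$ on $[0,\sqrt{\rho_t-u^2}]$.

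Next I would split the copula density (\ref{cf}) into its constant part $1$ and its FGM correction $\Theta(2F_X-1)(2F_Y-1)$, producing three groups of terms in $\mathcal{O}_p$ corresponding to the three $H$-function blocks of (15). In each group I would expand the incomplete-gamma factors $\Gamma(M,x)$ and $\Gamma(M-k,y)$ through their finite power-series representations (which is what generates the triple sum over $k,t,p$ in the last line of (15)) so that every integrand becomes a product of an exponential, a monomial power and — crucially — the indicator of the disk constraint $u^{2}+y^{2}\le\rho_t$.

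The key technical step is to replace the disk constraint by its Mellin–Barnes integral: writing $\mathbf{1}_{u^2+y^2\le\rho_t}$ via a contour representation (equivalently, expressing the truncation as a Fox $H$-function of a single variable and then using the Mellin transform of the remaining exponential/power factors) converts the two-dimensional integral over $(u,y)$ into a product of independent Mellin transforms tied together by one contour in $\rho_t$. Collecting the Gamma ratios in $s$ and $t$ inside the double contour integral gives exactly the bivariate Fox $H$-kernel $H^{0,0:\,\cdot,\cdot}_{0,1:\,\cdot,\cdot}\!\big[\begin{smallmatrix}\sqrt{\rho_t}\\ \sqrt{\rho_t}\end{smallmatrix}\big|\cdot\big]$ defined in \cite[Eq. (2.56)]{mathai}; matching parameters term by term yields the three $H$-functions in (15), with the factor $2\sqrt{\rho_t}$ in the last two arising from the $|y|^{M-1+k}$ powers whose Mellin pairs carry an additional $2$.

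The main obstacle I anticipate is bookkeeping rather than analytic: the triangular integration region couples $u$ and $y$, so a brute-force separation is impossible, and one must be careful that the Mellin contours chosen for the disk indicator, for $\Gamma(M,u^{2})$ and for $\Gamma(M-k,y)$ all have compatible strips of convergence so that Fubini applies and the resulting Gamma-products can be identified with the standard $H^{m_1,n_1:\ldots}_{p_1,q_1:\ldots}$ parameter lists. Once the strips are aligned, closing the contours and reading off the numerator/denominator Gamma parameters reproduces (15) exactly; the prefactors $\rho_t^{M/2}/(2\sqrt{\pi}\,\Gamma(M))$ and $\rho_t^{M/2}/(2^{3M}\Gamma(M)^2)$ follow from the normalizing constants of $f_X$, $f_Y$ and from the duplication formula applied to the $\Gamma(M\!-\!\tfrac12,2)$ entries.
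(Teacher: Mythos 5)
Your overall strategy---expand the incomplete gamma functions by their finite series, split the FGM density into its product part and its $\Theta$-correction, and reach the bivariate Fox $H$-function through Mellin--Barnes contours---is the same family of argument as the paper's, but your key technical step does not go through as stated. You propose to decouple the two integration variables by writing the disk indicator $\mathbf{1}_{u^{2}+y^{2}\le\rho_t}$ as a single Mellin--Barnes contour integral; that representation introduces the factor $(u^{2}+y^{2})^{-s}$, which is \emph{not} a product of a function of $u$ and a function of $y$, so the double integral does not split into ``independent Mellin transforms tied together by one contour in $\rho_t$.'' You would need a second, binomial-type Mellin--Barnes expansion of $(u^{2}+y^{2})^{-s}$, i.e.\ a second contour, before anything separates. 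The paper avoids this by first carrying out the inner $y$-integral in closed form---producing ${\rm G}^{2,0}_{0,2}\!\left[(\rho_t-x)/4\right]$ and the Meijer G-functions inside $B(\sqrt{\rho_t-x})$ in (16)--(17)---and only then writing the $\sqrt{x}$-dependent factors $e^{-\sqrt{x}}$ and $\Gamma(a,\sqrt{x})$ as contour integrals; the coupling between the two resulting contours is supplied by the Euler beta integral \cite[Eq.~(3.194.1)]{grad}, $\int_0^{\rho_t}x^{\mu-1}(\rho_t-x)^{\nu-1}dx=\rho_t^{\mu+\nu-1}\Gamma(\mu)\Gamma(\nu)/\Gamma(\mu+\nu)$, whose denominator $\Gamma(\mu+\nu)$ is precisely the joint lower parameter $(-\tfrac{M}{2};\tfrac{1}{2},2)$ (resp.\ $(0;\tfrac12,\tfrac12)$) in the $H^{0,0:\,\cdots}_{0,1:\,\cdots}$ blocks of (15). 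Without that step your parameter lists cannot be matched to \cite[Definition A.1]{mathai}.

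A second, smaller problem is your symmetrization ``replace $|y|$ by $y$ and multiply by $2$.'' This is valid only for the even part of the integrand: the FGM correction carries the odd factor $2F_Y(y)-1$, and over a symmetric interval $\int f_Y(y)\left(2F_Y(y)-1\right)dy=\left[F_Y^2-F_Y\right]$ vanishes, so doubling the one-sided integral does not reproduce the two-sided one and would either annihilate or double-count the $\Theta$-dependent blocks of (15). Relatedly, the arguments $2\sqrt{\rho_t}$ in the last two $H$-functions do not come from the Mellin pair of $|y|^{M-1+k}$; they come from the $e^{-2y}$ factors generated by the product $f_Y(y)F_Y(y)$ in the correction term, whose truncated integral is the ${\rm G}^{1,1}_{1,2}[2z]$ appearing in $B(z)$ with $z=\sqrt{\rho_t-x}$.
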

\begin{proof}
Using the copula-based joint distribution in Proposition 1 and resorting to $\Gamma(n,x)=\Gamma(n) e^{-x}\sum_{k=0}^{n-1}\frac{x^{k}}{k!}$, the inner integral in (\ref{opI}) can be evaluated, yielding  \begin{eqnarray}
    {\cal O}_p\!\!\!&=&\!\!\!\int_{0}^{\rho_t}\frac{f_X(\sqrt{x}){\cal K}(\Theta, x)}{2\sqrt{\pi}\Gamma(M)\sqrt{x}}  {\rm G}_{0,2}^{2,0} \Bigg[\frac{\rho_t\!-\!x} {4}\Bigg|\begin{array}{ccc}  - \\0, M\!-\!\frac{1}{2} \end{array}\Bigg. \!\!\Bigg]dx\nonumber \\ &&\!\!\!\!\!\!\!\!\!\!\!\!\!\!\!+ \int_{0}^{\rho_t}\frac{f_X(\sqrt{x})}{\sqrt{x}} \left({\cal K}(\Theta, x )-1\right) B(\sqrt{\rho_t-x})dx,
     \label{I}
     \end{eqnarray}
where ${\cal K}(\Theta, x)=1+\Theta\left(2F_X(\sqrt{x})-1\right)$, and
\begin{eqnarray}
    \!\!\! \!B(z)&=& \frac{1}{2^{3M}\Gamma(M)^{2}}  \sum_{k=0}^{M-1}\sum_{t=0}^{M-1}\sum_{p=0}^{M-t-1}\nonumber \\ && \!\!\!\!\!\!\!\!\!\!\!\!\!\!\!\!\!\!\!\!\!\!\!\!\!\!\!\!\!\!\!\frac{(M-1+k)!(M-1-t)!}{2^{2k+t+p}k!t!p!(M-1-k)!}
     {\rm G}_{1,2}^{1,1} \Bigg[2 z \Bigg|\begin{array}{ccc}  1 \\M+k+p,0 \end{array}\Bigg. \!\!\Bigg],
     \end{eqnarray}
     with  $ {\rm G}(\cdot)$ being the Meijer's G function \cite{grad}.
Now substituting (\ref{px})-(\ref{py}) into (\ref{I}), recognizing that  $\exp(-\sqrt{x})=\frac{1}{2\pi j}\int_{{\cal L}}\Gamma(s) x^{-s/2} ds$, $\Gamma(a,\sqrt{x})=\frac{1}{2\pi j}\int_{{\cal L}}\frac{\Gamma(a+s)\Gamma(s)}{\Gamma(1+s)} x^{-s/2} ds$ and \cite[Eq. (9.301)]{grad}  and then utilizing \cite[Eq.(3.194.1)]{grad}, the outage probability follows from applying \cite[Definiton A.1]{mathai}, which completes the proof.
\end{proof}
  \begin{corollary}
   The asymptotic (for high-SNR) outage probability of an RIS-aided system under one-bit quantization can be formulated as ${ \cal O}_p\underset{\rho\rightarrow\infty}{\approx}({\cal G}_c \rho )^{-{\cal G}_d}$,
where ${\cal G}_d=\frac{M}{2}$ denotes the diversity order and ${\cal G}_c=\frac{\Gamma(M-\frac{1}{2})}{2\sqrt{\pi}(1+\frac{M}{2})\Gamma(M)}$ denotes the coding gain.
  \end{corollary}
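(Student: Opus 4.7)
The plan is to extract the leading-order behaviour of ${\cal O}_p$ as $\rho_t=\gamma_{th}/\rho\to 0$, working directly from the one-dimensional reduction in (\ref{I}) rather than from the bivariate Fox H expression (15), since the small-argument behaviour of the Meijer G-function in (\ref{I}) is available in closed form through Bessel functions.

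The starting observation is that $f_X(\sqrt{x})/\sqrt{x}=x^{M/2-1}e^{-\sqrt{x}}/\Gamma(M)$, so the integrand of the first term of (\ref{I}) has an integrable singularity of order $x^{M/2-1}$ at $x=0$. Using $G_{0,2}^{2,0}[z\,|\,0,M-\tfrac{1}{2}]=2z^{(M-1/2)/2}K_{M-1/2}(2\sqrt{z})$ together with the small-argument asymptotic $K_\nu(\epsilon)\sim\tfrac{1}{2}\Gamma(\nu)(2/\epsilon)^{\nu}$, the Meijer G tends to $\Gamma(M-\tfrac{1}{2})$ as $(\rho_t-x)/4\to 0$. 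Replacing the remaining factors by their values at the origin, i.e. $e^{-\sqrt{x}}\to 1$ and ${\cal K}(\Theta,x)\to 1-\Theta$, and computing $\int_0^{\rho_t}x^{M/2-1}\,dx=2\rho_t^{M/2}/M$ then shows that the integral is of order $\rho_t^{M/2}$, which identifies the diversity order ${\cal G}_d=M/2$ immediately.

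To pin down ${\cal G}_c$ I would next (i) expand the second term of (\ref{I}) to the same order $\rho_t^{M/2}$ and verify that its leading contribution cancels the $(1-\Theta)$ dependence of the first term coming from ${\cal K}(\Theta,0)$, so that the resulting constant is $\Theta$-independent as the statement requires; and (ii) retain the next-order correction, via a Beta-type integral of the form $\int_0^{1}u^{M/2-1}(1-u)\,du=2/[M(M+2)]$, to produce the factor $1+M/2$ in the stated ${\cal G}_c$. Applying Legendre's duplication formula to the ratio $\Gamma(M-\tfrac{1}{2})/\Gamma(M)$ and reverting $\rho_t=\gamma_{th}/\rho$ then bring the prefactor into the canonical form $({\cal G}_c\rho)^{-M/2}$ with ${\cal G}_c=\Gamma(M-\tfrac{1}{2})/[2\sqrt{\pi}(1+M/2)\Gamma(M)]$.

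The main obstacle I anticipate is step (i): the $\Theta$ cancellation is a non-trivial book-keeping exercise that requires an explicit small-argument expansion of $B(\sqrt{\rho_t-x})$, identifying $B(0)$ in closed form via the residue of $G_{1,2}^{1,1}[\,\cdot\,|\,1;M+k+p,0]$ at $s=0$ and showing that the triple sum collapses to exactly the constant $-\Gamma(M-\tfrac{1}{2})/[2\sqrt{\pi}\Gamma(M)]$ needed to eliminate $\Theta$ from the coefficient. Once this cancellation is established, the remaining manipulations are routine Gamma-function arithmetic.
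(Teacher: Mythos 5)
Your route differs from the paper's: the paper obtains the corollary by applying the small-argument asymptotic expansion of the Mellin--Barnes integrals defining the bivariate Fox H-functions in (15) (Kilbas, Eqs.~(1.8.4) and (1.8.7)) and keeping the dominant residues, whereas you go back to the one-dimensional reduction (\ref{I}) and use the Bessel-$K$ form of $G^{2,0}_{0,2}$. That is a legitimate and arguably more transparent starting point, and your extraction of the diversity order is sound: the $x^{M/2-1}$ singularity of $f_X(\sqrt{x})/\sqrt{x}$ together with $G^{2,0}_{0,2}\to\Gamma(M-\tfrac12)$ at the origin does give $\mathcal{O}_p=\Theta$-dependent constant $\times\,\rho_t^{M/2}+o(\rho_t^{M/2})$, hence $\mathcal{G}_d=M/2$.

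The coding-gain half of the statement, however, is not proven. Your steps (i) and (ii) are announced rather than executed, and step (ii) contains a structural error: a next-to-leading-order correction of the Beta type $\int_0^1 u^{M/2-1}(1-u)\,du$ (which equals $4/[M(M+2)]$, not $2/[M(M+2)]$) contributes at order $\rho_t^{M/2+1}$ and therefore cannot alter the coefficient of $\rho_t^{M/2}$; so the factor $(1+M/2)$ in $\mathcal{G}_c$ cannot arise by the mechanism you describe unless you first show that the leading-order constant vanishes, which contradicts your own computation of the first term. The $\Theta$-cancellation in step (i) is likewise the crux, not a formality: you must evaluate $B(0)$ explicitly (e.g.\ via the residue of $G^{1,1}_{1,2}$ at the origin, or from $B(w)=F_Y(w)^2-F_Y(w)$) and show that $-\Theta\,B(0)\cdot\frac{2}{M\Gamma(M)}$ exactly offsets the $-\Theta\,\frac{\Gamma(M-\frac12)}{2\sqrt{\pi}\Gamma(M)^2}\cdot\frac{2}{M}$ coming from $\mathcal{K}(\Theta,0)=1-\Theta$; a quick check at $M=1$ (where $B(0)=-1/4$ but the required value is $-1/2$) suggests this does not go through as cleanly as hoped, so the bookkeeping must be done, not assumed. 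Finally, writing the result as $(\mathcal{G}_c\rho)^{-M/2}$ requires raising the computed leading coefficient to the power $-2/M$; your Gamma-function manipulations as described produce the coefficient itself, not its $-2/M$-th power, and this conversion step is missing. Until $B(0)$ is computed, the cancellation verified, and the origin of $(1+M/2)$ correctly located in the leading-order term, the corollary's expression for $\mathcal{G}_c$ remains unestablished by your argument.
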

\begin{proof}
It follows  by using the asymptotic expansion of Mellin-Barnes integrals of the bivariate Fox's H functions in (15) near zero  by applying \cite[Eq. (1.8.4]{kilbas} and by keeping only the dominant terms using \cite[Eq. (1.8.7)]{kilbas}.
\end{proof}
In the most general case, i.e., for an arbitrary choice of quantization level $L$ and generalized  fading model, the marginal densities and distributions of $X$ and $Y$ are either unknown or expressed in terms
of infinite integrals \cite[Appendix A]{ITrigui}. In this case, the copula-based joint density become much more complex. To circumvent this problem, we combine, hereafter, both copula and Gamma modeling.

\begin{lemma}
Letting $Z\in\{X, Y\}$, the distribution of $Z^{2}$ can be accurately approximated by the Gamma PDF with shape and scale parameters
given by $\kappa_Z$ and $\beta_Z$  as\vspace{-0.2cm}
\begin{equation}
    Z^{2}\simeq \Gamma(\kappa_Z, \beta_Z),
\end{equation}\vspace{-0.22cm}
where $\kappa_Z=\frac{\mathbb{E}(Z^{2})^{2}}{\mathbb{E}(Z^{4})-\mathbb{E}(Z^{2})^{2}}$ and $\beta_Z=\frac{\kappa_Z}{\mathbb{E}(Z^{2})}$.
\end{lemma}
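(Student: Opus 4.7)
The plan is to justify the Gamma approximation through the standard method of moments, since the Gamma law is uniquely determined (within its two-parameter family) by its mean and variance. First I would record the two defining identities for a $\Gamma(\kappa_Z,\beta_Z)$ variable: $\mathbb{E}[\Gamma(\kappa_Z,\beta_Z)]=\kappa_Z/\beta_Z$ and $\mathrm{Var}[\Gamma(\kappa_Z,\beta_Z)]=\kappa_Z/\beta_Z^{2}$. Equating these to the first two moments of $Z^{2}$, namely $\mathbb{E}[Z^{2}]$ and $\mathbb{E}[Z^{4}]-\mathbb{E}[Z^{2}]^{2}$, and solving the resulting two-by-two system yields precisely the stated expressions $\kappa_Z=\mathbb{E}[Z^{2}]^{2}/(\mathbb{E}[Z^{4}]-\mathbb{E}[Z^{2}]^{2})$ and $\beta_Z=\kappa_Z/\mathbb{E}[Z^{2}]$. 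This algebraic step is routine and is where the claimed parameter expressions come from.

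Next I would argue why the Gamma family is the natural candidate, so that matching only two moments yields an \emph{accurate} approximation rather than merely a crude one. Writing $Z=\sum_{i=1}^{M}|h_i||g_i|\,\xi_i$ with $\xi_i\in\{\cos\theta_i,\sin\theta_i\}$ i.i.d., $Z$ is a sum of $M$ i.i.d.\ nonnegative (for $X$) or symmetric (for $Y$) random variables; by the central limit theorem $Z$ is approximately Gaussian, whence $Z^{2}$ is approximately a scaled non-central/central chi-square, which is a special Gamma. In the non-asymptotic regime, the Gamma family is known (see, e.g., the approach justified in \cite{u1}) to provide a tight fit for squared sums of cascaded-fading envelopes, which supports using a two-moment fit here. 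I would therefore invoke this well-established moment-matching principle rather than try to bound a distance between distributions directly.

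The remaining technical task is the explicit computation of $\mathbb{E}[Z^{2}]$ and $\mathbb{E}[Z^{4}]$ for $Z\in\{X,Y\}$. Expanding the sums, $\mathbb{E}[Z^{2}]=\sum_{i}\mathbb{E}[|h_i|^{2}|g_i|^{2}]\mathbb{E}[\xi_i^{2}]+\sum_{i\neq j}\mathbb{E}[|h_i||g_i|]\mathbb{E}[|h_j||g_j|]\mathbb{E}[\xi_i]\mathbb{E}[\xi_j]$, and a similar but longer expansion for $\mathbb{E}[Z^{4}]$. The Rayleigh moments of $|h_i|,|g_i|$ are standard, and the moments of $\xi_i=\cos\theta_i$ or $\sin\theta_i$ with $\theta_i$ uniform on $[-\pi/L,\pi/L]$ reduce to elementary trigonometric integrals such as $\mathbb{E}[\cos\theta_i]=\tfrac{L}{\pi}\sin(\pi/L)$ and $\mathbb{E}[\cos^{2}\theta_i]=\tfrac{1}{2}+\tfrac{L}{4\pi}\sin(2\pi/L)$ (with the analogous sine identities vanishing for $Y$). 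Assembling these into the formulas for $\kappa_Z$ and $\beta_Z$ closes the argument.

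The main obstacle, conceptually, is the lack of a quantitative error bound: the lemma only asserts accuracy without specifying a metric. I would therefore either (i) defer rigorous justification to numerical validation shown later in the paper and cite prior moment-matching precedents such as \cite{u1}, or (ii) invoke the Gaussian CLT for $Z$ as $M$ grows to argue the Gamma fit becomes exact in that limit. Tracking the fourth-moment expansion of $Y$ for general $L$ is the most tedious computation, but it is mechanical and does not pose a structural difficulty.
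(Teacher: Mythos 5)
Your proposal follows essentially the same route as the paper: the stated $\kappa_Z$ and $\beta_Z$ are obtained by matching the first two moments of $Z^2$ (equivalently $\mathbb{E}[Z^2]$ and $\mathbb{E}[Z^4]$) to those of a Gamma law, and the substantive work is the explicit expansion of $\mathbb{E}[Z^2]$ and $\mathbb{E}[Z^4]$ using the Rayleigh moments of $|h_i||g_i|$ and the trigonometric moments of the uniform phase error on $[-\pi/L,\pi/L]$, exactly as the paper does. Your added CLT/chi-square heuristic and the remark that the lemma carries no quantitative error bound are reasonable embellishments, but the core argument coincides with the paper's proof.
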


\begin{proof}
In order to approximate $Z^{2}, Z\in\{X, Y\}$ being a Gamma random variable,  we have to find the shape and scale parameters
(i.e., $\kappa_Z$, $\beta_Z$) based on the statistical information of $Z$. To this end, we use  two different moments of $Z$ i.e.,  $\mathbb{E}(Z^{2})$ and $\mathbb{E}(Z^{4})$ to find the parameters $\kappa_Z$ and $\beta_Z$. First, from the definitions of $X$ and $Y$, we have \vspace{-0.24cm}
\begin{equation}
   X^{2}=\sum_{i=1}^{M}z_i^{2}\cos(\theta_i)^{2}+\sum_{i=1}^{M}\sum_{j=1, j\neq i}^{M}z_iz_j \cos(\theta_i)\cos(\theta_j),
\end{equation}\vspace{-0.25cm}
and \vspace{-0.15cm}
\begin{equation}
   Y^{2}=\sum_{i=1}^{M}z_i^{2}\sin(\theta_i)^{2}+\sum_{i=1}^{M}\sum_{j=1,j\neq i}^{M}z_iz_j \sin(\theta_i)\sin(\theta_j),
\end{equation}
where $z_i=|h_i||g_i|$.  Since $|h_i|$ and $|g_i|$ undergo i.i.d. unit variance Rayleigh fading, we have   $\mathbb{E}(z_i)=\frac{\pi}{4}$, $\mathbb{E}(z_i^{2})=1$,  $\mathbb{E}(z_i^{3})=\frac{9\pi}{16}$, and  $\mathbb{E}(z_i^{4})=4$. Moreover, since $\theta_i$ is a uniform random variable with PDF $f_{\theta}(x)=\frac{L}{2\pi}$, $-\frac{\pi}{L}\leq x \leq\frac{\pi}{L}$, we obtain \vspace{-0.25cm}
\begin{equation}
    \mathbb{E}( Z^{2})=\begin{cases}
M\left(\frac{1}{2}\!+\!\frac{1}{2}\rm{sinc}\left(\frac{2\pi}{L}\right)\!+\!\frac{(M-1)L^{2}}{16}\sin^{2}\left(\frac{\pi}{L}\right) \right) & Z=X;\\M\left(\frac{1}{2}-\frac{1}{2}\rm{sinc}\left(\frac{2\pi}{L}\right)\right) & Z=Y,
\end{cases}
\label{moy1}
\end{equation}
Next, in order to find the moment $\mathbb{E}( Z^{4})$, we initially expand it as\vspace{-0.3cm}
\begin{eqnarray}
    \mathbb{E}( Z^{4})&=&\mathbb{E}\left[\left(\sum_{i=1}^{M}d_i^{2}\right)^{2}\right]+ 2 \sum_{l=1}^{M}\sum_{i=1}^{M}\sum_{j=1, j\neq i }^{M}\mathbb{E}\left[d_l^{2} d_id_j\right]\nonumber \\ && +\mathbb{E}\left[\left(\sum_{i=1}^{M}\sum_{j=1, i\neq j }^{M}d_id_j\right)^{2}\right],
    \label{moy}
\end{eqnarray}
where $d_i=z_i \cos(\theta_i)$ if $Z=X$ or $d_i=z_i \sin(\theta_i)$ if  $Z=Y$. The first term on the rights side of (\ref{moy}) can be expressed as \vspace{-0.25cm}
\begin{eqnarray}
\mathbb{E}\left[\left(\sum_{i=1}^{M}d_i^{2}\right)^{2}\right]=\sum_{i=1}^{M}\mathbb{E}\left[d_i^{4}\right]+\sum_{i=1}^{M}\sum_{j=1,j\neq i}^{M}\mathbb{E}\left[d_i^{2}d_j^{2}\right],
\end{eqnarray}
which, after some manipulations, can be rewritten as \vspace{-0.25cm}
\begin{eqnarray}\mathbb{E}\left[\left(\sum_{i=1}^{M}d_i^{2}\right)^{2}\right]=
\begin{cases}\frac{M}{16 \pi^{2}}(A\!+B) & Z=X;\\\frac{M}{16 \pi^{2}}(B-A) & Z=Y,\end{cases}
\label{e1}
\end{eqnarray}
where $A=4 \pi(3\!+\!M)L\sin\left(\frac{2\pi}{L}\right)$ and $B=\!L^{2}(M\!-\!1)\sin^{2}\left(\frac{2\pi}{L}\right)\!+\!2 \pi\left(2(5\!+\!M)\pi\!+\!L\sin\left(\frac{4\pi}{L}\right)\right)$. The second term of (\ref{moy}) can be expressed as (\ref{ex1}), as
shown at the top of this page. Further, the third term of (\ref{moy}) can be expressed as (\ref{ex2}), as shown at the top of the this page. Finally, after plugging (\ref{e1}), (\ref{ex1}) and
(\ref{ex2}) back into (\ref{moy}) and using (\ref{moy1}) and  several simplifications, Lemma 1 is proved.
\begin{figure*}\vspace{-0.3cm}
  \begin{equation}
 \begin{array}{l}
2\sum\limits_{l = 1}^M {\sum\limits_{m = 1}^M {\sum\limits_{n = 1, n\neq m}^M {{\mathbb E}\left[ {d_l^2{d_m}{d_n}} \right]}} }  = 4\sum\limits_{m = 1}^M {\sum\limits_{n = 1,n \ne m}^M {{\mathbb E}\left[ {d_m^3{d_n}} \right]} }
 + 2\sum\limits_{l = 1}^M {\sum\limits_{m = 1,m \ne l}^M {\sum\limits_{n = 1,n \ne m}^M {{\mathbb E}\left[ {d_l^2{d_m}{d_n}} \right]} } } \\
 =\begin{cases}\frac{1}{{64 \pi}}M\left( {M - 1} \right){L^2}{\sin ^2}\left( {\frac{\pi }{L}} \right)\left((11+4M)\pi+3 \pi \cos\left( {\frac{2\pi }{L}} \right)+2L(M-1)\sin\left( {\frac{2\pi }{L}} \right)\right) & Z=X;\\
 0& Z=Y,
 \end{cases}
\end{array}
\label{ex1}
\end{equation} \vspace{-0.2cm}
 \hrulefill
\end{figure*}
\begin{figure*}\vspace{-0.25cm}
 \begin{equation}
\begin{array}{l}
{\mathbb E}\left[ {{{\left( {\sum\limits_{m = 1}^M {\sum\limits_{n =  1, n\neq m}^N {{d_m}{d_n}} } } \right)}^2}} \right]
 = \sum\limits_{j = 1}^M {\sum\limits_{l =1, l \ne m}^M { {{\mathbb E}\left[ {d_j^2d_l^2} \right]} } }
 + \sum\limits_{l = 1}^M {\sum\limits_{m = 1,m \ne l}^M {\sum\limits_{n = m + 1,n \ne l}^N {{\mathbb E}\left[ {d_l^2{d_m}{d_n}} \right]} } } \\ + \sum\limits_{j = 1}^M {\sum\limits_{l = 1,l \ne j \ne m \ne n}^M {\sum\limits_{m = 1,l \ne j \ne m \ne n}^M {\sum\limits_{n = 1,l \ne j \ne m \ne n}^M {{\mathbb E}\left[ {{d_j}{d_l}{d_m}{d_n}} \right]} } } } \\
  =\begin{cases}\frac{M(M-1)}{256 \pi^2}\left(8 L^2 (M-1) \pi^2 \sin\left(\frac{\pi}{L}\right)^2+8\pi L^3(M-1)\sin\left(\frac{\pi}{L}\right)^3\cos\left(\frac{\pi}{L}\right)\right)  & Z=X; \\+
   L^4 \left(6  - 5 M + M^2\right) \pi^2 \sin\left(\frac{\pi}{L}\right)^4 +
   16  \left(2 \pi + L \sin\left(\frac{2 \pi}{L}\right)\right)^2 \\
 \frac{M^{2}\left(L\sin\left( {\frac{2\pi }{L}} \right)-2\pi\right)^{2}}{16\pi^{2}}& Z=Y,
 \end{cases}
\end{array}
\label{ex2}
\end{equation} \vspace{-0.25cm}
 \hrulefill
 \vspace{-0.2cm}
\end{figure*}
\end{proof}\vspace{-0.25cm}
\begin{proposition}
A copula-based joint distribution of $X^2$ and $Y^2$ under $b$-bits phase  quantization and Rayleigh fading is given by \vspace{-0.35cm}
\begin{eqnarray}
    f_{X^{2},Y^{2}}(x,y)&=&\frac{e^{-\frac{x}{\beta_X}-\frac{y}{\beta_Y}}x^{\kappa_X-1}y^{\kappa_Y-1}}{\Gamma(\kappa_X)\Gamma(\kappa_Y)\beta_X^{\kappa_X}\beta_Y^{\kappa_Y} }\nonumber\\&&\!\!\!\!\!\!\!\!\!\!\!\!\!\!\!\!\!\!\!\!\!\!\!\!\!\!\!\!\!\!\!\!\!\!\!\!\!\!\!\!\!\!\!\!\!\!\!\! \left(1+\Theta\left(1- \frac{2\Gamma(\kappa_X,\frac{x}{\beta_X})}{\Gamma(\kappa_X)}\right)\left(1- \frac{2\Gamma(\kappa_Y,\frac{y}{\beta_Y})}{\Gamma(\kappa_Y)}\right)\right).
    \label{fxy2}
\end{eqnarray}
\end{proposition}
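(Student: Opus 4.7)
The plan is to build $f_{X^2,Y^2}$ by combining Lemma~1 with the bivariate FGM copula density already used in Proposition~1, rather than trying to compute the joint density directly (which is intractable for arbitrary $L$, as the authors note). Concretely, Sklar's theorem together with the chain rule gives, exactly as in \eqref{pdf},
\begin{equation}
f_{X^{2},Y^{2}}(x,y)=c\bigl(F_{X^{2}}(x),F_{Y^{2}}(y)\bigr)\,f_{X^{2}}(x)\,f_{Y^{2}}(y),
\end{equation}
so the task reduces to identifying the two marginals and the copula density, then multiplying them out.

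For the marginals I would invoke Lemma~1 directly: it states $Z^{2}\simeq\Gamma(\kappa_Z,\beta_Z)$ for $Z\in\{X,Y\}$ with the moment-matched parameters $\kappa_Z$ and $\beta_Z$ valid for general $b$-bit quantization and Rayleigh fading. This immediately yields
\begin{equation}
f_{Z^{2}}(z)=\frac{e^{-z/\beta_Z}z^{\kappa_Z-1}}{\Gamma(\kappa_Z)\,\beta_Z^{\kappa_Z}},\qquad F_{Z^{2}}(z)=1-\frac{\Gamma(\kappa_Z,z/\beta_Z)}{\Gamma(\kappa_Z)},
\end{equation}
using the standard incomplete-Gamma expression of the Gamma CDF. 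The product $f_{X^{2}}(x)f_{Y^{2}}(y)$ then produces the prefactor $e^{-x/\beta_X-y/\beta_Y}x^{\kappa_X-1}y^{\kappa_Y-1}/(\Gamma(\kappa_X)\Gamma(\kappa_Y)\beta_X^{\kappa_X}\beta_Y^{\kappa_Y})$ that appears in \eqref{fxy2}.

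For the copula factor I would use the bivariate FGM density \eqref{cf}, which in terms of the marginals above reads
\begin{equation}
c\bigl(F_{X^{2}}(x),F_{Y^{2}}(y)\bigr)=1+\Theta\bigl(2F_{X^{2}}(x)-1\bigr)\bigl(2F_{Y^{2}}(y)-1\bigr).
\end{equation}
Plugging in the Gamma CDFs gives $2F_{Z^{2}}(z)-1=1-2\Gamma(\kappa_Z,z/\beta_Z)/\Gamma(\kappa_Z)$, which is exactly the bracketed factor of \eqref{fxy2}. Multiplying the copula density by the marginal product completes the identification.

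The main obstacle here is conceptual rather than computational: one must argue that the FGM copula with the same dependence parameter $\Theta\in[-1,1]$ remains an admissible model for the pair $(X^{2},Y^{2})$ once the exact marginals are replaced by the Gamma approximations of Lemma~1. I would justify this by noting that copulas are invariant under strictly increasing transformations of the marginals (so the dependence structure of $(X,Y)$ carries over to $(X^{2},Y^{2})$ on the relevant half-lines), and that the FGM family is closed under arbitrary continuous marginal substitution, so that only the marginal inputs $F_{X^{2}},F_{Y^{2}}$ need updating when moving from the exact one-bit analysis of Proposition~1 to the Gamma-approximated $b$-bit analysis here. With this observation, the derivation is a direct substitution and no further calculation is required.
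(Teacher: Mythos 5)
Your derivation matches the paper's own proof of this proposition: the paper likewise obtains \eqref{fxy2} by repeating the FGM construction of Proposition 1 (Sklar's theorem plus the chain rule, giving the product of the bivariate FGM density with the two marginals) after replacing the exact one-bit marginals by the Gamma-approximated $f_Z(z)=e^{-z/\beta_Z}z^{\kappa_Z-1}/(\Gamma(\kappa_Z)\beta_Z^{\kappa_Z})$ and $F_Z(z)=1-\Gamma(\kappa_Z,z/\beta_Z)/\Gamma(\kappa_Z)$ for $Z\in\{X^2,Y^2\}$ from Lemma 1. One minor caveat on your closing justification: since $Y$ takes negative values, $y\mapsto y^2$ is not strictly increasing on the support of $Y$, so copula invariance does not literally carry the dependence structure of $(X,Y)$ over to $(X^2,Y^2)$ --- but the paper does not rely on such an argument either; it simply posits the FGM copula as the model for the squared pair.
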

\begin{proof}
The proof follows  in the same line of (\ref{fxy}) using the Gamma model approximation of $X^{2}$ and $Y^{2}$ in Lemma 1 after recognizing that $f_Z(z)=\frac{e^{-\frac{z}{\beta_Z}}z^{\kappa_Z-1}}{\Gamma(\kappa_Z)\beta_Z^{\kappa_Z}}$ and $F_Z(z)=1-\frac{\Gamma(\kappa_Z,\frac{z}{\beta_Z})}{\Gamma(\kappa_Z)}$, $Z\in\{X^{2},Y^{2}\}$.
\end{proof}\vspace{-0.2cm}
\begin{proposition}
The outage probability of RIS-assisted system with $b$-bit phase  quantization can be expressed as
\begin{eqnarray}
      \! {\cal O}_p\!\!\!\!&=&\!\!\!\!\!\int_{0}^{\rho_t}\!\!f_{X^{2}}(x) \left(\Theta{\cal G}(x)+1\right)\left(1-\frac{\Gamma\left(\kappa_Y,\frac{\rho_t-x}{\beta_Y}\right)}{\Gamma(\kappa_Y)} \right)dx \nonumber \\ &&  -2\Theta \int_{0}^{\rho_t}\frac{{\cal G}(x)f_{X^{2}}(x)}{\Gamma(\kappa_Y)}\left(\frac{\rho_t-x}{\beta_Y} \right)^{\kappa_Y}\nonumber \\ && \!\!\!\!\!\!\!\!\!\!\!\!\!\!\!\!\!\!\!\!\!\!\!\!\!\!\times {\rm H}_{0,1:0,1,1,2}^{0,0:1,0,2,0} \Bigg[\begin{array}{ccc}\frac{\rho_t-x}{\beta_Y}\\  \frac{\rho_t-x}{\beta_Y}\end{array}\Bigg|\begin{array}{ccc}  -: -;  (1,1) \\\!\!\!(\kappa_Y,1,1):\!(0,1);(0,1),(\kappa_Y,1) \end{array}\Bigg.\!\!\!\!\Bigg] dx,\nonumber \\
   \end{eqnarray}
   where ${\cal G}(x)=1- \frac{2\Gamma(\kappa_X,\frac{x}{\beta_X})}{\Gamma(\kappa_X)}$.
\end{proposition}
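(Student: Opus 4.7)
The plan is to start from the outage definition ${\cal O}_p = \Pr(X^2 + Y^2 \le \rho_t)$, write it as
\[
{\cal O}_p = \int_0^{\rho_t}\!\int_0^{\rho_t-x} f_{X^2,Y^2}(x,y)\,dy\,dx,
\]
and substitute the copula-based joint density of Proposition~2. Because the FGM copula density has the separable form $1 + \Theta\,{\cal G}(x)\,{\cal H}(y)$ with ${\cal H}(y) = 1 - 2\Gamma(\kappa_Y,y/\beta_Y)/\Gamma(\kappa_Y)$, the integrand splits into $f_{X^2}(x)f_{Y^2}(y)(1+\Theta{\cal G}(x))$ plus a correction proportional to $\Theta{\cal G}(x){\cal H}(y)f_{X^2}(x)f_{Y^2}(y)$, so the outage integral becomes a sum of two contributions that can be treated separately in the inner $y$-integration.

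For the first contribution, integrating $f_{Y^2}(y)$ over $[0,\rho_t-x]$ produces the Gamma CDF $F_{Y^2}(\rho_t-x) = 1 - \Gamma(\kappa_Y,(\rho_t-x)/\beta_Y)/\Gamma(\kappa_Y)$, which, multiplied by $f_{X^2}(x)(1+\Theta{\cal G}(x))$ and integrated over $x$, reproduces the first term of the claimed expression verbatim. The non-trivial task is the second contribution: after the change of variables $u=y/\beta_Y$ the only integral still coupling the two Gamma factors is
\[
\int_0^{z}\frac{e^{-u}u^{\kappa_Y-1}}{\Gamma(\kappa_Y)}\,\Gamma(\kappa_Y,u)\,du, \qquad z=\tfrac{\rho_t-x}{\beta_Y},
\]
which must be evaluated in closed form.

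To evaluate it I would follow the Mellin--Barnes route already used in the proof of Proposition~2: represent $e^{-u}$ and $\Gamma(\kappa_Y,u)$ by the contour integrals $e^{-u}=(2\pi j)^{-1}\int_{\cal L}\Gamma(s)u^{-s}\,ds$ and $\Gamma(\kappa_Y,u)=(2\pi j)^{-1}\int_{\cal L}\Gamma(\kappa_Y+t)\Gamma(t)/\Gamma(1+t)\,u^{-t}\,dt$, swap the order of integration, and execute the $u$-integral in closed form via \cite[Eq.~(3.194.1)]{grad}. This pulls out a factor $z^{\kappa_Y}$ and leaves a two-dimensional Mellin--Barnes integral whose integrand matches the defining form of \cite[Definition~A.1]{mathai}, so the result is identified as the bivariate Fox H-function ${\rm H}_{0,1:0,1,1,2}^{0,0:1,0,2,0}$ evaluated at $(z,z)$ with the parameter tuples appearing in the statement. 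Substituting back $z=(\rho_t-x)/\beta_Y$ and retaining the outer integration against $-2\Theta{\cal G}(x)f_{X^2}(x)/\Gamma(\kappa_Y)$ produces the second term of the proposition.

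The main obstacle is the parameter bookkeeping for the bivariate Fox H-function---aligning the upper/lower index tuples $\binom{0,0:1,0,2,0}{0,1:0,1,1,2}$ and the coefficient triples $(\kappa_Y,1,1)$, $(1,1)$, $(0,1)$, and $(\kappa_Y,1)$ so that every $\Gamma$-factor in the two-dimensional Mellin--Barnes representation lands in the correct slot, both arguments of the H-function coincide at $z$, and no spurious poles appear when the contours are closed. Apart from this accounting the derivation is a routine symbolic rearrangement, and the remaining outer $x$-integral is left unevaluated, exactly as stated.
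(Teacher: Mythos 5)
Your proposal is correct and follows essentially the same route as the paper, whose proof of this proposition is the one-line instruction to substitute the joint density $f_{X^2,Y^2}$ (from Proposition~3, not Proposition~2 as you label it) into the outage integral and evaluate the inner $y$-integration ``by following similar steps as in Proposition~2.'' Your expansion of that instruction---splitting off the $(1+\Theta\,{\cal G}(x))F_{Y^2}(\rho_t-x)$ piece, isolating the coupled integral $\int_0^{z}e^{-u}u^{\kappa_Y-1}\Gamma(\kappa_Y,u)\,du/\Gamma(\kappa_Y)$, and resolving it through the Mellin--Barnes representations and the identification with the bivariate Fox H-function---is precisely the computation the paper is invoking.
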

\begin{proof}
It follows from substituting (\ref{fxy2}) into (\ref{opI}) and solving the integral with respect to $y$ by following similar steps as in  Proposition 2.
\end{proof}

To the best of our knowledge,  Propositions 1 and 3, characterize for the fisrt time in the literature the joint distribution between the underlying real and imaginary parts of the received signal through RIS, which due to the presence of phase noise, exhibit arbitrary correlation. This finding allows us to derive the  exact outage probability for different quantization levels, as shown in Proposition 2 and 4 for the first disclosure.\\
It is worth noting that   other performance metrics, including the ergodic capacity, error probability and secrecy rate, also depend remarkably on the joint distributions in Propositions 1 and 2. It is therefore of interest to investigate in future research how such dependency structure can be exploited in RIS-aided communications. Moreover,  by leveraging fundamental results from the  Mellin transform  \cite{mathai} and Copula  \cite{Bickel} theories, it is possible to extend the current framework to  deal with inherent complexities due to generalized  fading models such as general multi-path with/without specular component (LOS) and shadowing.  \vspace{-0.31cm}
   \section{Numerical Results}
In Fig.~1 (a), we show the outage probability  for the
RIS-aided system  under the  condition  of one-bit phase quantization as described in Proposition 2. It can be observed that the Monte-Carlo simulation of the outage probability
 matches the copula-based closed-form expression in Proposition 2. Notice that our  approach outperforms the Nakagami-m approximation in \cite{Badiu}, which is degraded for low outage values.  Fig.~1 (a)  further illustrates that the copula-based asymptotic analysis
is very accurate showing that the diversity order is indeed $M/2$.

In Fig.~1 (b), the behavior of the outage probability based on variations of $M$ and $b$ for selected values of the dependence parameter $\Theta$ is illustrated\footnote{In general, the appropriate value of the dependence factor $\Theta$ for the FGM copula can be determined by minimizing particular cost functions using for instance the likelihood-based methods \cite{theta}.}. We examine the closeness between the simulated values of the outage and the
approximations based on the the Gamma model presented in Proposition 4.  Fig.~1 (b) corroborates
the fact that  increasing both $M$ and $b$ improves the performance of the RIS system. In particular, it is shown that using 2 bit phase shifters is practically sufficient to achieve close-to-optimal performance with only approximately $0.9$ dB power loss, which is consistent with \cite[Proposition 1]{qw}.

 Denoting the distances to and from the RIS as $l_1$, $l_2$,  the equivalent
path-loss of the RIS link can be written as $l= (l_1l_2)^{-\nu}$, where $\nu$ is the path-loss exponent.
In Fig.~1 (c), we investigate the outage probability for different
geographical deployments of the RIS which is allowed
to move along the horizontal line between the transmitter and the receiver. An observation from Fig.~1 (c), which agrees with the finding
in \cite{ITrigui}, is that  the RIS  has increased outage as it  moves far away from either the transmitter or the receiver. It is also noticed  from Fig.~1 (b) and (c) that copula is more accurate for positive dependence structure i.e., when  $\Theta\in(0,1]$.
\begin{figure*}\vspace{-1.4cm}
      \captionsetup[subfigure]{justification=centering}
    \centering
       \begin{subfigure}{0.31\textwidth}
        \includegraphics[width=\textwidth]{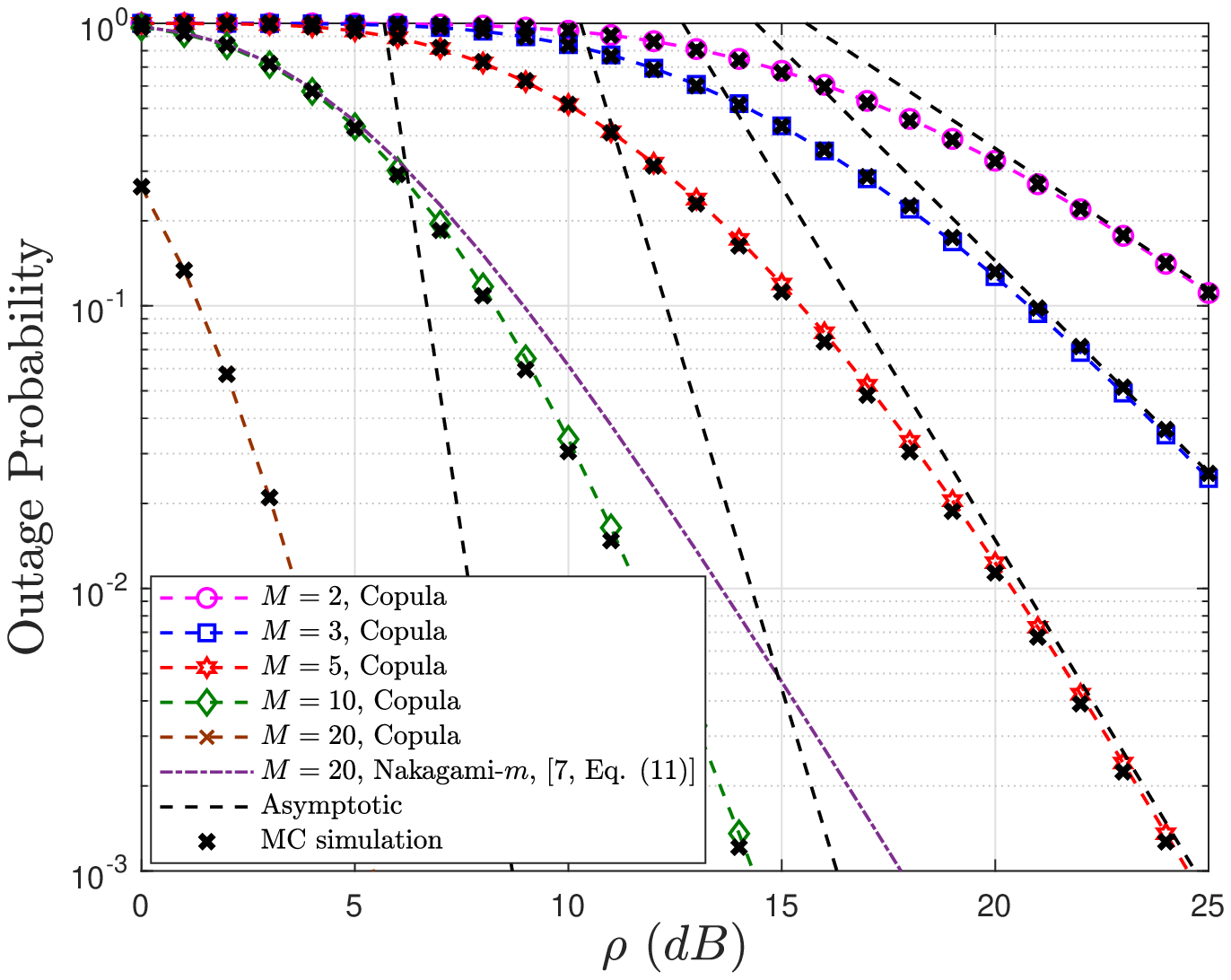}
        \caption{\\  }
    \end{subfigure}%
      \hfill \hspace{-5cm}
      \begin{subfigure}{0.31\textwidth}
        \includegraphics[width=\textwidth]{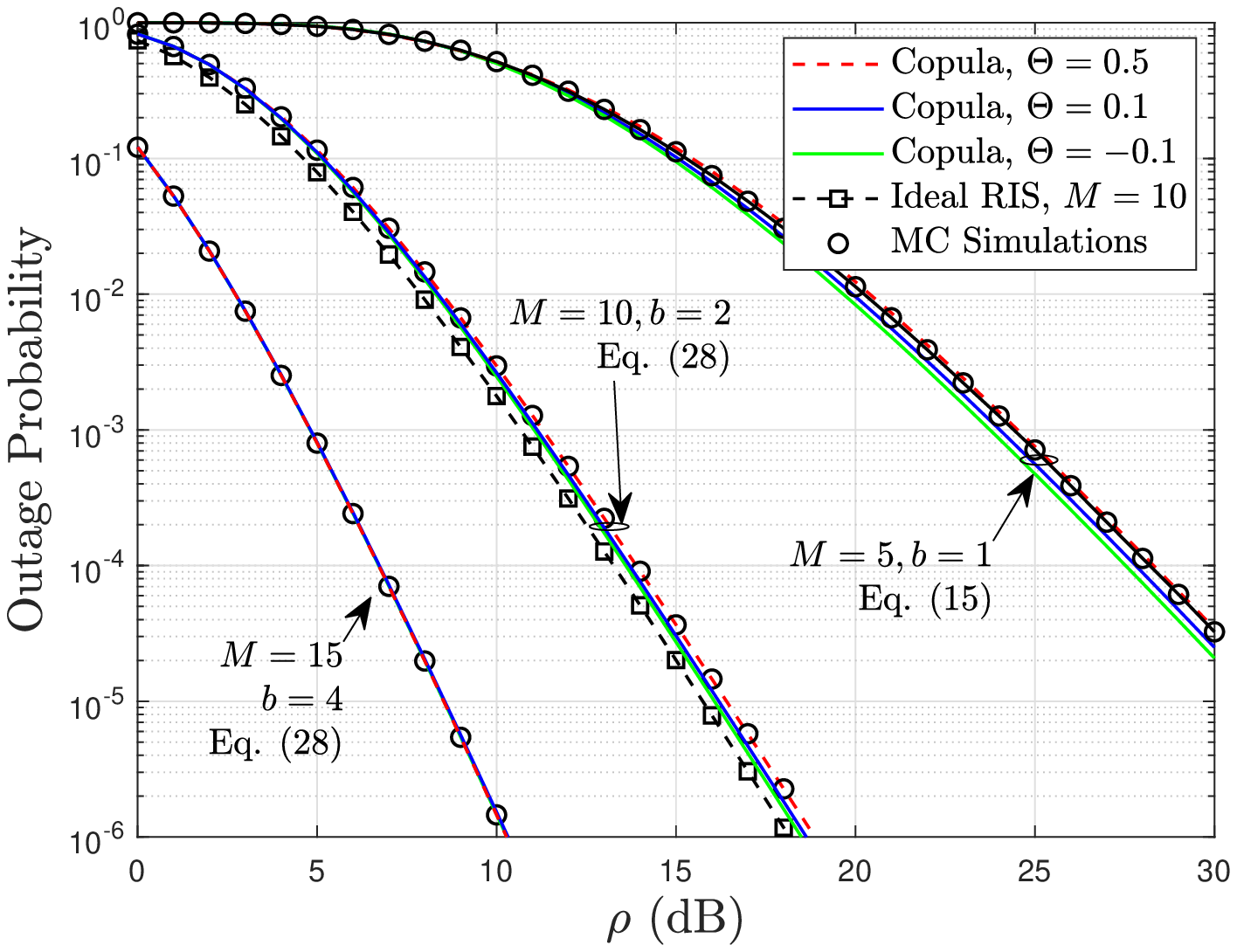}
        \caption{ \\}
    \end{subfigure}
            \hfill\hspace{-5.2cm}
      \begin{subfigure}{0.31\textwidth}
        \includegraphics[width=\textwidth]{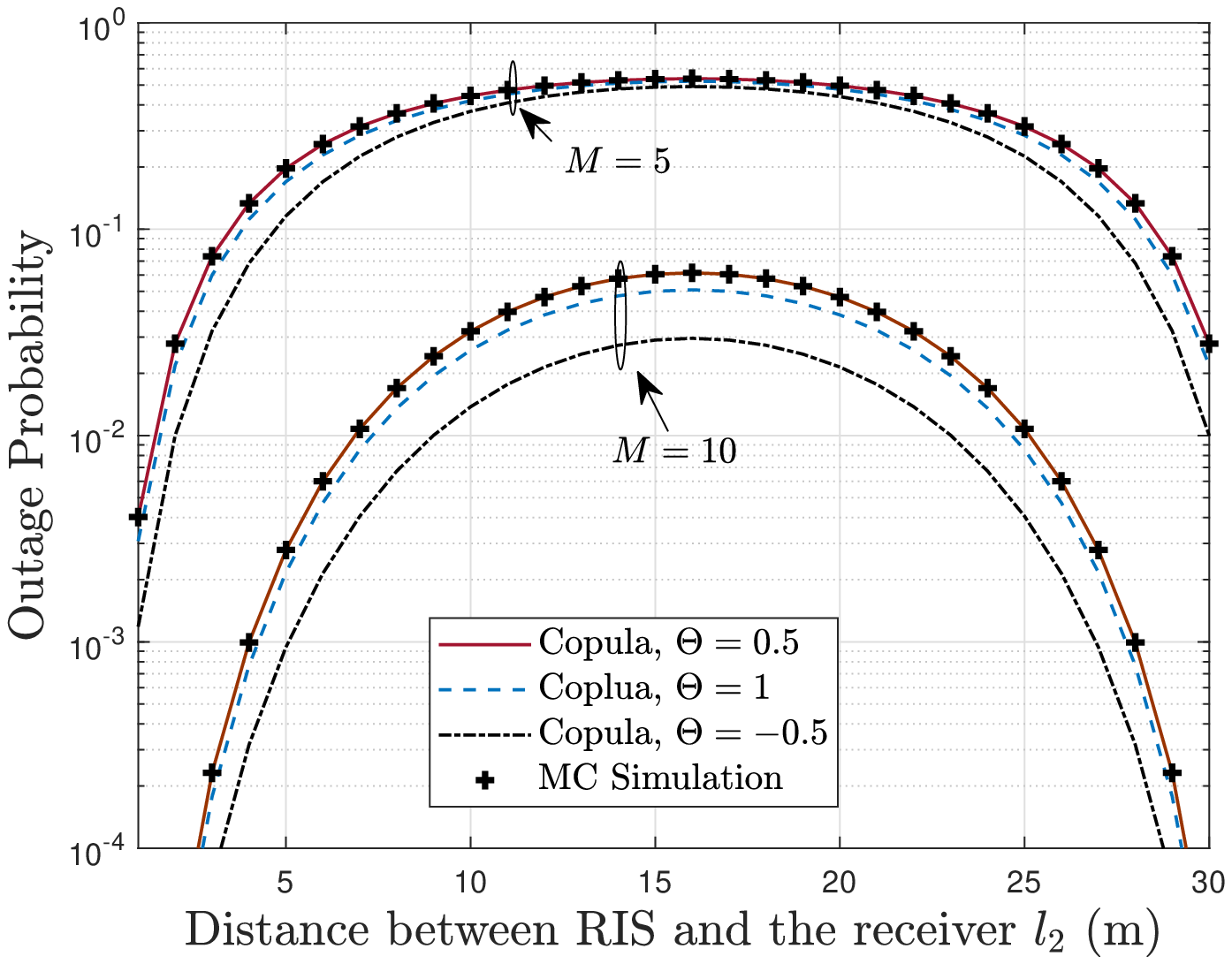}
\caption{ \\}
    \end{subfigure}\hfill\hspace{-5cm}\vspace{-0.2cm}
    \caption{\small The outage probability with $\gamma_{th}=5$ dB under (a) one-bit phase quantization with dependence parameter  $\theta=0.55$. (b)  $b$-bit quantization  for different $M$ and $\Theta$ values. (c) one-bit quantization for different positions of the RIS with $p_t/\sigma^{2}=15$ dB and $\nu=2.8$. }\vspace{-0.55cm}
                \end{figure*}

\vspace{-0.3cm}
\section{Conclusion}
In this letter, we developed a theoretical framework to analyze the outage probability of RIS-aided systems with phase noise in which copulas are exploited to capture the non-linear dependence among the signal components. When a one-bit phase shifter is used at each reflective element, we obtained the expression of the outage probability using  the bivariate Fox's H function.  To deal with the complicated scenario where the RIS employs  $b$-bit phase shifters, we amalgamated both copula and Gamma modeling to efficiently compute the outage probability.
Our results are the foundations of any further study that relies on the joint  density and cumulative
probability functions of the underlying variables pertaining to the received signal via low resolution RIS.

\vspace{-0.25cm}
\section{Appendix A}
As noted earlier,  $\theta_i$ is uniformly distributed on the interval
$[-\frac{\pi}{2}, \frac{\pi}{2}]$. As a result, the PDF of $v_i=\cos(\phi_i)$ is given by
\begin{equation}
f_{v_i}(v)=\frac{2}{\pi\sqrt{1-v^{2}}}, \quad\text{for~} 0\leq v \leq1.
\label{cs}
\end{equation}
Recall that both forward and backward channels of the RIS-aided system follow i.i.d. Rayleigh fading $f_{y_i}(x)=2 x {\rm{e}}^{-x^{2}}$, $y\in \{h,g\}$, using (\ref{cs}), the PDF  of $x_i=h_i g_i v_i$ can be calculated as\vspace{-0.2cm}
\begin{equation}
f_{x_i}(z)=\int_{0}^{\infty}\frac{1}{x}f_{h_i}(x)f_{g_i v_i}\left(\frac{z}{x}\right) \textrm{d}x,
\label{eqf}
\end{equation}
where\vspace{-0.22cm}
\begin{equation}
f_{g_i v_i}\left(x\right)=\frac{4}{\pi}\int_{0}^{1}\frac{\rm{e}^{-\frac{x^{2}}{z^{2}}}}{z^{2}\sqrt{1-z^{2}}}\textrm{d}z.
\label{eq1}
\end{equation}
Hence the pdf of $x_i$ is obtained as
$
f_{x_i}(x)=\exp(-x).
$
Similarly, the PDF of $y_i=h_i g_i\sin(\theta_i)$ is $f_{y_i}(y)=\frac{1}{2}\exp(-|y|)$. Since the sum of
independent exponentially distributed random variables with the same mean is Gamma-distributed, we can obtain easily the PDFs of  $X=\sum_{i=1}^{M}x_i$ and $Y=\sum_{i=1}^{M}y_i$   as   given in (11) and (12), respectively.

\vspace{-0.26cm}

\end{document}